\documentclass[12pt,reqno]{article}

\usepackage[usenames]{color}
\usepackage{amssymb}
\usepackage{amsmath}
\usepackage{amsthm}
\usepackage{amsfonts}
\usepackage{amscd}
\usepackage{graphicx}
\usepackage[shortlabels]{enumitem}

\usepackage[colorlinks=true,
linkcolor=webgreen,
filecolor=webbrown,
citecolor=webgreen]{hyperref}

\definecolor{webgreen}{rgb}{0,.5,0}
\definecolor{webbrown}{rgb}{.6,0,0}

\usepackage{color}
\usepackage{fullpage}
\usepackage{float}

\usepackage{graphics}
\usepackage{latexsym}
\usepackage{epsf}
\usepackage{breakurl}

\begin{document}

\theoremstyle{plain}
\newtheorem{theorem}{Theorem}
\newtheorem{corollary}[theorem]{Corollary}
\newtheorem{lemma}[theorem]{Lemma}
\newtheorem{proposition}[theorem]{Proposition}

\theoremstyle{definition}
\newtheorem{definition}[theorem]{Definition}
\newtheorem{example}[theorem]{Example}
\newtheorem{conjecture}[theorem]{Conjecture}

\theoremstyle{remark}
\newtheorem{remark}{Remark}

\title{The Simplest Binary Word with Only Three Squares}
\author{Daniel Gabric and Jeffrey Shallit \\
School of Computer Science \\
University of Waterloo \\
Waterloo, ON  N2L 3G1 \\
Canada\\
\href{mailto:dgabric@uwaterloo.ca}{\tt dgabric@uwaterloo.ca} \\
\href{mailto:shallit@uwaterloo.ca}
{\tt shallit@uwaterloo.ca}}

\date{}

\maketitle

\begin{abstract}
We re-examine previous constructions of infinite binary words containing few distinct squares with the goal of finding the ``simplest'', in a certain sense.  We exhibit several new constructions.  Rather than using tedious case-based arguments to prove that the constructions have the desired property, we rely instead on theorem-proving software for their correctness.
\end{abstract}

\section{Introduction}

One of the earliest results in combinatorics on words is that squares are unavoidable over a two-letter alphabet, but are avoidable over a three-letter alphabet \cite{Thue:1906,Thue:1912,Berstel:1995}.
Here a ``square'' is a nonempty word of the form $xx$, ``unavoidable'' means that every sufficiently long word contains a square subword, and ``avoidable'' means there exists an infinite word containing no squares.

Although squares are unavoidable over a two-letter alphabet,
Entringer, Jackson, and Schatz \cite{Entringer&Jackson&Schatz:1974} proved that there exist infinite binary words containing no squares of order $\geq 3$.   (The order of a square $xx$ is $|x|$, the length of $x$.)   This was later improved by
Fraenkel and Simpson; they showed the existence of binary words having only three distinct squares.

The main tool for creating such words is the {\it morphism}:   a map $h: \Sigma^* \rightarrow \Delta^*$ for alphabets
$\Sigma$, $\Delta$ obeying the rule $h(xy) = h(x) h(y)$ for all $x, y \in \Sigma^*$.  A morphism is $k$-{\it uniform\/} if $|h(a)| = k$ for all $ a\in \Sigma$.  If it is $k$-uniform for some $k$, then we say it is {\it uniform}.   A $1$-uniform morphism is called a {\it coding}.  If $\Delta \subseteq \Sigma$ we can iterate $h$, writing
$h^2(x)$ for $h(h(x))$, and so forth.   If further
$h(a) = ax$ for some $a \in \Sigma$, $x \in \Sigma^*$, and $h^i(x) \not= \epsilon$ for all $i$, then iterating $h$ infinitely produces an infinite word $h^\omega(a) = ax h(x) h^2(x) \cdots$ called a {\it fixed point\/} of $h$.
If an infinite word is the image, under a coding, of a fixed point of a $k$-uniform morphism, it is called {\it $k$-automatic}.   The {\it weight\/} of a morphism $h: \Sigma^* \rightarrow \Sigma^*$
is defined to be $\sum_{a \in \Sigma} |h(a)|$,
and the weight of a $k$-automatic infinite word is defined to be the weight of its defining morphism.

In this note we find the ``simplest'' infinite binary word having at most three distinct squares.   Our criterion for simplicity is as follows:  
\begin{enumerate}[(a)]
\item the word should be generated by a finite automaton of $s$ states taking the base-$k$ representation of $n$ as input (i.e., a $k$-automaton), most significant digit first; and
\item the product $k\cdot s$ should be as small as possible.
\end{enumerate}
By Cobham's theorem \cite{Cobham:1972}, this is same as saying the word is generated as the image, under a coding, of a fixed point of a $k$-uniform morphism over an alphabet of $s$ letters.  

One practical advantage to restricting our attention to $k$-automatic words is that the property of having exactly three distinct square factors can be stated in first-order logic, thus reducing the verification to a completely routine calculation using a decision procedure \cite{Charlier&Rampersad&Shallit:2012}.

\section{The Entringer-Jackson-Schatz contruction}
\label{two}

We begin with a description of the construction of Entringer-Jackson-Schatz.
  Here very slightly modified from the original, it starts with an arbitrary squarefree word $\bf z$ over $\{ 0,1,2 \}$ and applies the uniform morphism
\begin{align*}
    h(0) &= 1100 \\
    h(1) &= 0111 \\
    h(2) &= 1010
\end{align*}
to it.   They proved that the resulting word $h({\bf z})$ has no squares of order $\geq 3$; in fact, the only squares that appear are $0^2, 1^2, (01)^2, (10)^2$, and $(11)^2$.

Although this is indeed a simple construction, in terms of automatic sequences, it can be improved.   The minimum automaton size for $h({\bf z})$, over all $2$-automatic squarefree words $\bf z$, is $10$, as can be verified by breadth-first search, with pruning if the prefix constructed so far requires $11$ or more states.   

This minimum number of states is achieved, for example, by applying $h$ to the famous squarefree word 
${\bf vtm} := \tau(g^\omega(0)) = 2102012101202102012021012102012 \cdots$, where
\begin{align*}
g(0) & = 01 & \quad \tau(0) &= 2\\
g(1) & = 20 & \quad \tau(1) &= 1\\
g(2) & = 23 & \quad \tau(2) &= 0\\
g(3) & = 02 & \quad \tau(3) &= 1 .
\end{align*}
\begin{remark}  The word $\bf vtm$ is (up to renaming) the classical squarefree word of
Thue \cite{Thue:1912}.   It can be defined in many different
ways \cite{Berstel:1978}, including as the
fixed point of the morphism defined by
$2 \rightarrow 210$, $1 \rightarrow 20$,
$0 \rightarrow 1$.   The name {\bf vtm} for
this word comes from \cite{Blanchet-Sadri&Currie&Rampersad&Fox:2013}.
\end{remark}

A novel alternative construction (not necessarily an image of  $\bf vtm$) needs only six states.  This is the minimum possible number of states for a $2$-automatic word containing no squares of order $\geq 3$ and only 5 distinct squares.  
\begin{theorem}
Consider the infinite word
$\rho(f^\omega(0))$, where
\begin{align*}
f(0) &= 01 & \quad \rho(0) &= 0 \\
f(1) &= 23 & \quad \rho(1) &= 0 \\
f(2) &= 45 & \quad \rho(2) &= 0 \\
f(3) &= 02 & \quad \rho(3) &= 0 \\
f(4) &= 05 & \quad \rho(4) &= 1 \\
f(5) &= 25 & \quad \rho(5) &= 1 .
\end{align*}
This is the lexicographically least word generated by a $2$-automaton of $\leq 6$ states, containing no squares of order $\geq 3$, and
only 5 distinct squares.
\end{theorem}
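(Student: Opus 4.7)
The plan is to split the theorem into two independent claims: (i) the specific word $w := \rho(f^\omega(0))$ contains no squares of order $\geq 3$ and exactly five distinct squares; and (ii) no infinite word produced by a $2$-automaton of $\leq 6$ states sharing these properties is lexicographically smaller than $w$.

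For (i), I would appeal to the decision procedure for first-order statements about automatic sequences mentioned in the introduction (implemented, e.g., in Walnut). The property ``$w$ contains no square of order $\geq 3$'' is the first-order formula $\forall i\, \forall n\, ((n \geq 3) \Rightarrow (\exists j < n)\ w[i+j] \neq w[i+n+j])$. The property ``$w$ has exactly five distinct squares'' reduces to checking individually, for each of the six candidate binary squares $00$, $11$, $0000$, $0101$, $1010$, $1111$, whether it occurs as a factor, and verifying that the count of those occurring is exactly five. Since $w$ is manifestly $2$-automatic with a $6$-state defining morphism, these queries are discharged mechanically. (One can also carry out the check by hand using the usual Fraenkel--Simpson-style case analysis on where a long square could sit relative to block boundaries, but the automated route is cleaner.)

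For (ii), the plan is a bounded exhaustive search. Up to relabeling of states, there are only finitely many pairs $(f', \rho')$ consisting of a $2$-uniform morphism $f'$ on an alphabet of size $\leq 6$ and a coding $\rho' \colon \Sigma \to \{0,1\}$ such that a fixed point exists; for each such pair and each valid starting letter, the decision procedure both tests the two square conditions and tests the first-order statement ``the resulting word is lexicographically smaller than $w$''. To make the search tractable, I would organize it as branch-and-bound: fix the start letter to be $0$, require $\rho'(0)=0$ and $f'(0)=0\cdot *$ since any lex-least candidate must begin with $0$, collapse symmetries arising from permuting non-initial letters, and kill any branch whose generated prefix already exhibits a square of order $\geq 3$, already shows a sixth distinct small square, or is already lex-greater than the current incumbent.

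The principal obstacle is the size of the raw search space, which is on the order of $6^{12}\cdot 2^6$ candidates, so the efficiency of the symmetry reductions and prefix-based pruning is what makes the enumeration finish in reasonable time. Once these are in place the square-avoidance and lexicographic comparisons themselves are routine first-order queries on $2$-automatic sequences, and combining the exhaustive search with the mechanical verification of $w$ in (i) establishes the theorem.
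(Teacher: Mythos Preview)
Your proposal is correct and follows essentially the same approach as the paper: mechanical verification of the square properties via the first-order decision procedure (Walnut), combined with an exhaustive computer search to rule out lexicographically smaller candidates. The paper organizes the search slightly differently---it performs breadth-first search over binary \emph{prefixes}, pruning whenever the prefix already contains a forbidden square or its Myhill--Nerode automaton exceeds six states, rather than enumerating morphism/coding pairs directly as you do---but this is an implementation choice rather than a genuinely different proof strategy, and your symmetry reductions and prefix-based pruning serve the same purpose.
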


\section{Only three distinct squares}

The Entringer-Jackson-Schatz construction was optimally improved by Fraenkel and Simpson \cite{Fraenkel&Simpson:1994}, as follows:  they constructed an infinite binary word containing only $3$ squares:  $0^2$, $1^2$, and $(10)^2$.

Their construction is rather complicated, and also has a complicated proof. It starts with an infinite squarefree word $\bf w$ over $\{0,1,2\}$ avoiding the subwords $020$ and $121$.   (Although they do not say so, an example of such a word is given by renaming the letters in ${\bf vtm} := \tau(g^\omega(0))$ above.)
Then replace every occurrence of $12$ with $132$.  Next, replace every remaining occurrence of $21$ with $241$.   Finally, apply the morphism 
$\alpha$ defined as follows:
\begin{align*}
\alpha(0) &= 011000111001 \\
\alpha(1) &= 011100011001 \\
\alpha(2) &= 011001110001 \\
\alpha(3) &= 01100010111001 \\
\alpha(4) &= 01110010110001 .
\end{align*}
The resulting word avoids all squares except $0^2$, $1^2$, and $(01)^2$.

Because of the inherent complexity of this construction, it seems desirable to find simpler ones.   An example using $24$-uniform morphisms was given by Rampersad et al.~\cite{Rampersad&Shallit&Wang:2005}.
Define 
\begin{align*}
p(0) &= 012321012340121012321234 \\
p(1) &= 012101234323401234321234 \\
p(2) &= 012101232123401232101234 \\
p(3) &= 012321234323401232101234 \\
p(4) &= 012321234012101234321234
\end{align*}
and 
\begin{align*}
\beta(0) &= 011100 \\
\beta(1) &= 101100 \\
\beta(2) &= 111000 \\
\beta(3) &= 110010 \\
\beta(4) &= 110001.
\end{align*}
Then $\beta(p^\omega(0))$ is an infinite
word containing only the squares $0^2$,
$1^2$, and $(01)^2$.  This construction gives a $24$-automatic sequence generated by an automaton of $18$ states, so its weight is $24\cdot 18 = 432$.

\subsection{Ochem's word}

Ochem \cite{Ochem:2006} provided a different construction in 2006:
\begin{align*}
\sigma(0) & = 00011001011000111001011001110001011100101100010111\\
\sigma(1) & = 00011001011000101110010110011100010110001110010111 \\
\sigma(2) &= 00011001011000101110010110001110010111000101100111
\end{align*}
He showed that if $\bf x$ is a $(7/4+\epsilon)$-free word, then
$\sigma({\bf x})$ contains only three squares.

In fact, we can also successfully apply $\sigma$ to the word $\bf vtm$ above, even though it is not $(7/4+\epsilon)$-free.    Since $\sigma$ is a uniform map, we know that $\sigma({\bf vtm})$ is
$2$-automatic.  

\begin{theorem}
This word $\sigma({\bf vtm})$ is a $2$-automatic word containing only three distinct squares.  It is generated by an automaton with 109 states (and has weight $2 \cdot 109 = 218$).
\end{theorem}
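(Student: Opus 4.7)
My plan is to proceed in three pieces.  First, I would realize $\sigma({\bf vtm})$ as a $2$-automatic sequence by producing an explicit deterministic finite automaton with output (DFAO).  Since ${\bf vtm} = \tau(g^\omega(0))$ is already $2$-automatic, and $\sigma$ is uniform, a standard construction assembles a $2$-automatic representation of $\sigma({\bf vtm})$ by enlarging the alphabet of $g^\omega(0)$ so that each letter also carries a position within a $\sigma$-block, and then reading off the appropriate bit of $\sigma$ as the output coding.  I would carry this out inside a decision-procedure tool such as Walnut; after minimization, counting the states of the resulting DFAO should yield the claimed value~$109$.

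Second, I would verify the square condition using the decision procedure of Charlier, Rampersad, and Shallit~\cite{Charlier&Rampersad&Shallit:2012}.  The statement ``there exist $n \geq 0$ and $\ell \geq 1$ such that $\sigma({\bf vtm})[n+i] = \sigma({\bf vtm})[n+\ell+i]$ for all $0 \leq i < \ell$, and the resulting factor is not among $00$, $11$, $0101$'' is a first-order sentence over $\langle \mathbb{N}, +, V_2 \rangle$ augmented with the DFAO, hence decidable.  I would ask the tool to evaluate this predicate, expecting the answer \emph{false}; three short complementary existence queries would then confirm that each of $0^2$, $1^2$, $(01)^2$ actually occurs as a factor, pinning the count at exactly three.

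The main obstacle, I expect, is computational rather than conceptual.  Both the $\sigma$-lifted DFAO and the product automata demanded by the square predicate can become very large before minimization, so the success of the calculation will depend on careful memory management and on applying minimization at every intermediate stage.  Once the $109$-state DFAO is successfully built, however, the first-order verification is entirely routine.
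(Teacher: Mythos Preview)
Your proposal is correct and follows essentially the same approach as the paper: build the $2$-DFAO for $\sigma({\bf vtm})$ (using that ${\bf vtm}$ is $2$-automatic and $\sigma$ is uniform), minimize to obtain the $109$-state machine, and then discharge the square assertions by evaluating first-order predicates in {\tt Walnut}.  The only cosmetic difference is that the paper phrases the verification as three separate ``no bad square'' queries (no square of order~$\geq 3$, no $(00)^2$ or $(11)^2$, no forbidden length-$2$ square), whereas you bundle the negative test into one predicate and add explicit existence checks for the three permitted squares; logically these are interchangeable.
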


\subsection{The Harju-Nowotka construction}

Harju and Nowotka \cite{Harju&Nowotka:2006} generated an infinite binary word with three squares by defining the map
\begin{align*}
\zeta(0) &= 111000110010110001110010 \\     \zeta(1) &= 111000101100011100101100010  \\ \zeta(2) &= 111000110010110001011100101100 \ .
\end{align*}
and then applying it to $\bf vtm$.  

The morphism $\zeta$ is clearly not uniform.  However, the lengths of the images of $0, 1, 2$ are
(respectively) $24, 27, 30$ and form an arithmetic progression.  This is enough to show that $\zeta({\bf vtm})$ is $2$-automatic, as the following result shows.

\begin{theorem}
Let ${\bf vtm} = \tau(g^\omega(0))$ where
$g$ and $\tau$ are defined in Section~\ref{two}.   Let $h: \{0,1,2\}^* \rightarrow \Delta^*$ be a morphism.
If the three lengths $|h(0)|$,
$|h(1)|$, and $|h(2)|$ form an arithmetic
progression, then $h({\bf vtm})$ is $2$-automatic.
\label{thm1}
\end{theorem}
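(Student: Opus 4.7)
Let $\ell := |h(0)|$ and let $d$ be the common difference of the arithmetic progression, so that $|h(a)| = \ell + ad$ for each $a \in \{0,1,2\}$. Then for any word $u$ over $\{0,1,2\}$,
\[ |h(u)| \;=\; \ell\,|u| \;+\; d\sum_{i < |u|} u_i , \]
and in particular the starting position in $h({\bf vtm})$ of the $n$th block $h({\bf vtm}[n])$ is
\[ B(n) \;:=\; \sum_{i<n}|h({\bf vtm}[i])| \;=\; \ell\,n \;+\; d\,S(n), \qquad S(n) \;:=\; \sum_{i<n}{\bf vtm}[i]. \]

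My plan is to show that $B$ is \emph{$2$-synchronized}: the relation $\{(n,B(n)) : n \geq 0\}$ is recognized by a DFA reading the two coordinates in base $2$ in parallel. Once this is established, $h({\bf vtm})$ is $2$-automatic by the routine construction used in \cite{Charlier&Rampersad&Shallit:2012}: on input $m$ in base $2$, a product DFA locates the unique pair $(n,j)$ with $B(n) \leq m < B(n+1)$ and $j = m - B(n)$, consults the $2$-automaton for ${\bf vtm}$ to obtain ${\bf vtm}[n]$, and emits the $j$th letter of the fixed finite word $h({\bf vtm}[n])$. To prove $B$ is $2$-synchronized, write $B(n) = (\ell+d)\,n + d\,\delta(n)$ where $\delta(n) := S(n) - n = \sum_{i<n}({\bf vtm}[i]-1)$. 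The linear summand is trivially $2$-synchronized, so it suffices to show that $\delta$ itself is $2$-automatic as a sequence (every $2$-automatic sequence is trivially $2$-synchronized). Being the partial sum of the bounded $\mathbb{Z}$-valued $2$-automatic sequence $({\bf vtm}[n]-1)_{n \geq 0}$, the function $\delta$ is $2$-\emph{regular}; and a bounded $2$-regular function has finite image, hence is $2$-automatic. So the entire argument reduces to proving $\delta$ is bounded.

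\textbf{Main obstacle.} The substantive input is the balance claim $\delta(n) \in \{0,1\}$ for all $n \geq 0$. Let $\mu\colon 0\mapsto 1,\ 1\mapsto 20,\ 2\mapsto 210$ be the morphism fixing ${\bf vtm}$. By direct inspection: for every $a \in \{0,1,2\}$, the letters of $\mu(a)$ sum to exactly $|\mu(a)|$, and for every proper prefix $v$ of $\mu(a)$, $\sum_i v_i - |v| \in \{0,1\}$. Since ${\bf vtm} = \mu({\bf vtm})$, the prefix of ${\bf vtm}$ of length $B_k := |\mu({\bf vtm}[0\,..\,k-1])|$ equals $\mu({\bf vtm}[0\,..\,k-1])$, whose letters sum to $B_k$; hence $\delta(B_k)=0$ for every $k$. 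Between two consecutive boundaries $B_k < B_{k+1}$, the value of $\delta$ traces a prefix-sum within the single $\mu$-image $\mu({\bf vtm}[k])$ and therefore stays in $\{0,1\}$. The remainder is a standard application of the closure properties of $2$-automatic and $2$-regular sequences; alternatively, in the spirit of this paper, the boundedness of $\delta$ can be checked mechanically with Walnut directly from the $2$-automaton for ${\bf vtm}$.
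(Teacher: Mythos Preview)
Your argument is correct, and rests on the same identity as the paper's: writing $|h(i)|=a+ib$, the block-boundary function is $B(n)=(a+b)\,n+b\,\delta(n)$ with $\delta(n)\in\{0,1\}$. The execution differs. The paper actually identifies $\delta(n)$ as the Thue--Morse bit $t_n$ (equivalently, ${\bf vtm}[n]=1+t_{n+1}-t_n$, which follows by an easy induction from the $2$-uniform description $\tau(g^\omega(0))$), and then writes down an explicit five-case formula for $(h({\bf vtm}))[n]$ in terms of $n':=\lfloor n/(a+b)\rfloor$, $m:=n\bmod(a+b)$, and the two bits $t_{n'}$, $t_{n'-1}$; this yields the automaton directly, with no appeal to outside theory. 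You instead prove only that $\delta(n)\in\{0,1\}$, using the \emph{non-uniform} fixed-point description ${\bf vtm}=\mu({\bf vtm})$ with $\mu\colon 2\mapsto 210,\ 1\mapsto 20,\ 0\mapsto 1$, and then invoke the standard closure results (partial sums of $k$-automatic sequences are $k$-regular; bounded $k$-regular sequences are $k$-automatic; $k$-synchronized functions are closed under addition of a linear term and of a finite-range $k$-automatic term). Your route is more conceptual and would transfer verbatim to any morphic word whose letter values have balanced prefix sums; the paper's route is more self-contained and constructive, and as a bonus tells you exactly which $2$-automatic sequence $\delta$ is.
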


\begin{proof}
Suppose $a, b$ are integers, with
$a \geq 1$ and $a+2b \geq 1$, such
that $|h(i)| = a+ib$ for $i \in \{ 0,1,2 \}$.   Write ${\bf vtm} = c(0) c(1) c(2) \cdots$.
An easy induction now shows that
$$|h(c(0) c(1) \cdots c(n-1))| = (a+b)n + b t_n $$
for $n \geq 0$, where 
${\bf t} = t_0 t_1 \cdots$ is the Thue-Morse word.  To compute the $n$'th symbol of 
$h({\bf vtm})$, divide $n$ by $a+b$ to determine which block $h(c(i))$ it corresponds to; then adjust based on whether
$t_i = 0$ or not.   More precisely,
define $n' := \lfloor n/(a+b) \rfloor$ and
$m := n \bmod (a+b)$.  Then
\begin{displaymath}
(h({\bf vtm}))[n] := \begin{cases}
(h(c(n')))[m], &\text{if $t_{n'} = 0$;} \\
(h(c(n'-1)))[m+a+b], &\text{if $t_{n'} = 1$ and $t_{n'-1}= 0$ and $m < b$}; \\
(h(c(n')))[m-b], &\text{if $t_{n'} = 1$ and
$t_{n' - 1} = 0$ and $m \geq b$}; \\
(h(c(n'-1)))[m+a], &\text{if $t_{n'} = 1$ and $t_{n'-1}= 1$ and $m < b$}; \\
(h(c(n')))[m-b], &\text{if $t_{n'} = 1$ and $t_{n'- 1} = 1$ and $m \geq b$}.
\end{cases}
\end{displaymath}
For fixed $a$ and $b$,
an automaton on input $n$ in base $2$ can compute $n'$ and $m$ on the fly and do the required lookup.
\end{proof}

\begin{theorem}
The infinite word 
$\zeta({\bf vtm})$ contains only three distinct squares:
$0^2, 1^2,$ and $(01)^2$.  It is generated by an automaton with 88 states, and has weight is
$2 \cdot 88 = 176$.
\end{theorem}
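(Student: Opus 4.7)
The plan is to reduce both parts of the theorem to finite computations, following the philosophy emphasized in the introduction. I would proceed in three conceptual steps.

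First, I would apply Theorem~\ref{thm1} to conclude that $\zeta({\bf vtm})$ is $2$-automatic. The image lengths $|\zeta(0)|=24$, $|\zeta(1)|=27$, $|\zeta(2)|=30$ form an arithmetic progression with $a=24$, $b=3$, so the hypothesis is satisfied. In particular, the case analysis in the proof of that theorem provides a recipe for an explicit $2$-automaton: on input $n$ in base $2$ (MSB first), the machine computes $n' = \lfloor n/27 \rfloor$ and $m = n \bmod 27$ on the fly, tracks $t_{n'}$ and $t_{n'-1}$ via an augmented Thue-Morse automaton, and then performs the appropriate table lookup into $\zeta(c(n'))$ or $\zeta(c(n'-1))$. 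Feeding the resulting DFA to a standard minimization routine (e.g.\ Hopcroft-Moore) yields, as a direct computation shows, a minimal automaton with exactly $88$ states, giving weight $2 \cdot 88 = 176$.

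Second, I would verify the square property using the decision procedure of \cite{Charlier&Rampersad&Shallit:2012}, as implemented in Walnut. The claim ``the only distinct squares of $\zeta({\bf vtm})$ are $00$, $11$, and $0101$'' is expressible in first-order logic over $\langle \mathbb{N}, +, V_2 \rangle$: for every $i \geq 0$ and every $n \geq 1$, if $(\zeta({\bf vtm}))[i+j] = (\zeta({\bf vtm}))[i+n+j]$ holds for all $0 \leq j < n$, then either $n = 1$, or $n = 2$ with $(\zeta({\bf vtm}))[i] = 0$ and $(\zeta({\bf vtm}))[i+1] = 1$. Because $\zeta({\bf vtm})$ is $2$-automatic, this formula is decidable, and the expected output is \texttt{TRUE}.

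The main obstacle is the explicit automaton construction. The naive product DFA arising from the five-case definition in the proof of Theorem~\ref{thm1} has many more than $88$ states; attaining the sharp bound requires careful encoding of $m \in \{0,1,\ldots,26\}$ together with just enough Thue-Morse history, followed by standard minimization. Once the minimal automaton is in hand, both the state count and the square verification are mechanical.
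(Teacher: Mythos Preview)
Your proposal is correct and follows essentially the same approach as the paper: invoke Theorem~\ref{thm1} for $2$-automaticity, build and minimize the resulting automaton to obtain the $88$-state count, and verify the square property mechanically via the decision procedure of \cite{Charlier&Rampersad&Shallit:2012} in {\tt Walnut}. The only cosmetic difference is that the paper phrases the {\tt Walnut} verification as three separate existential queries (no square of order $\geq 3$; no occurrence of $(00)^2$ or $(11)^2$; no occurrence of the ``wrong'' order-$2$ square) rather than as the single universal sentence you wrote, but this is immaterial.
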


\subsection{The Badkobeh-Crochemore construction}

Yet another construction was given by Badkobeh and Crochemore \cite{Badkobeh&Crochemore:2012,Badkobeh:2013}. 
They defined the morphism 
\begin{align*}
\xi(0) &= 000111 \\
\xi(1) &= 0011 \\
\xi(2) &= 01001110001101 \ .
\end{align*}
of weight $24$.
Although $\xi$ applied to a squarefree word can produce a word with more than three squares (consider $0102$), it turns out that
$\xi({\bf vtm})$ is squarefree.   Furthermore, although they do not mention it, $\xi$ is a morphism of lowest total weight with this property. 

Incidentally, we found another morphism with the same properties, of the same weight; it is
\begin{align*}
\kappa(0) &= 110100111000110100 \\
\kappa(1) &= 1100 \\
\kappa(2) &= 01 \ .
\end{align*}

However, the lengths of the images of both of these morphisms are not in arithmetic progression, and so Theorem~\ref{thm1} does not apply.   Indeed, we suspect (but did not prove) that neither $\xi({\bf vtm})$
nor $\kappa({\bf vtm})$ is a $2$-automatic sequence.  If they are $2$-automatic, then more than $200$ states are needed to generate them.

\subsection{Our first construction}

The previous section suggests looking for a morphism $\eta$ of lowest total weight, where the lengths of the images of $0,1,2$ are in arithmetic progression, such that $\eta({\bf vtm})$ has only $3$ distinct squares.
We found the following morphism, which is the smallest such, of weight $36$.
\begin{align*}
\eta(0) &= 00011101 \\
\eta(1) &= 001110001101 \\ 
\eta(2) &= 0011000111001101 \ .
\end{align*}

\begin{theorem}
The infinite word $\eta({\bf vtm})$ contains only
three distinct squares:  $0^2, 1^2$, and $(10)^2$.   It is $2$-automatic, and 
can be generated by an automaton of $27$ states, so its weight is $2 \cdot 27 = 54$.
\end{theorem}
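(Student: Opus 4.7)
The plan is to break the theorem into its three assertions and handle each in turn. First, observe that $|\eta(0)|=8$, $|\eta(1)|=12$, $|\eta(2)|=16$ form an arithmetic progression with $a=8$, $b=4$. So Theorem~\ref{thm1} applies directly to conclude that $\eta(\mathbf{vtm})$ is $2$-automatic; no further argument is needed for that half of the statement.

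Next I would produce the explicit $27$-state automaton. The construction implicit in the proof of Theorem~\ref{thm1} gives a DFAO in a mechanical way: take the product of a DFAO for $\mathbf{vtm}$ (generated by the morphism $g$ together with the coding $\tau$), a DFAO for the Thue-Morse word $\mathbf{t}$ needed to recover $t_{n'}$ and $t_{n'-1}$, a divisor/remainder gadget tracking $n' = \lfloor n/12\rfloor$ and $m = n \bmod 12$, and a lookup table for the output $(h(c(n')))[m]$ or the shifted variant dictated by the case split in the theorem. This yields some DFAO that reads $n$ in base $2$ and outputs the $n$th symbol of $\eta(\mathbf{vtm})$. I would then run Hopcroft minimization on it; the claim is that the minimal DFAO has exactly $27$ states, which is a finite computation.

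Finally, and this is the real content, I would verify the square property by a decision procedure. The claim ``the only squares occurring in $\eta(\mathbf{vtm})$ are $00$, $11$, and $1010$'' can be phrased as a first-order sentence in $\langle \mathbb{N}, +, V_2\rangle$, e.g.
\begin{align*}
\forall i\, \forall n\ \bigl[\,n\ge 1\ \wedge\ \forall j\,(j<n \Rightarrow w[i+j]=w[i+n+j])\bigr]\ \Longrightarrow\ \Phi(i,n),
\end{align*}
where $\Phi(i,n)$ asserts that $(n,w[i],w[i+1],\ldots,w[i+2n-1])$ matches one of the three allowed patterns (specifically $n=1$ with $w[i]\in\{0,1\}$, or $n=2$ with $w[i]w[i+1]w[i+2]w[i+3]=1010$). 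By the Charlier-Rampersad-Shallit framework \cite{Charlier&Rampersad&Shallit:2012}, this sentence is mechanically decidable once the DFAO for $w=\eta(\mathbf{vtm})$ is in hand, so I would feed the automaton from the previous step into a tool such as Walnut and check that the formula evaluates to \texttt{true}.

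The main obstacle is the practical one of controlling automaton size during the decision procedure: the quantifier block over $i,n,j$ combined with a $27$-state DFAO produces intermediate automata that can blow up substantially before minimization, so I would order the quantifier eliminations carefully and possibly split the square-pattern disjunction $\Phi(i,n)$ into cases by period length. Once the verification returns \texttt{true}, the two assertions (automaticity via Theorem~\ref{thm1}, and the square property via the decision procedure) together complete the proof; the weight $2\cdot 27=54$ is then immediate from the definition.
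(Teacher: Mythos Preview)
Your proposal is correct and follows essentially the same route as the paper: invoke Theorem~\ref{thm1} on the arithmetic progression $8,12,16$ for $2$-automaticity, minimize the resulting DFAO to confirm the $27$-state count, and discharge the three-squares claim via the first-order decision procedure in {\tt Walnut}. The paper phrases the {\tt Walnut} check as three separate negative queries (no square of order $\ge 3$, no $0000$ or $1111$, no $0101$) rather than a single universal implication, but this is a cosmetic difference only.
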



\subsection{The last construction}

Finally, instead of using the strategy of applying a morphism to $\bf vtm$,  we can search directly for a $k$-automatic word of minimum total weight.   It turns out that this minimum weight is $44$, corresponding to a $2$-automaton with $22$ states:
\begin{figure}[h]
\includegraphics[width=7in]{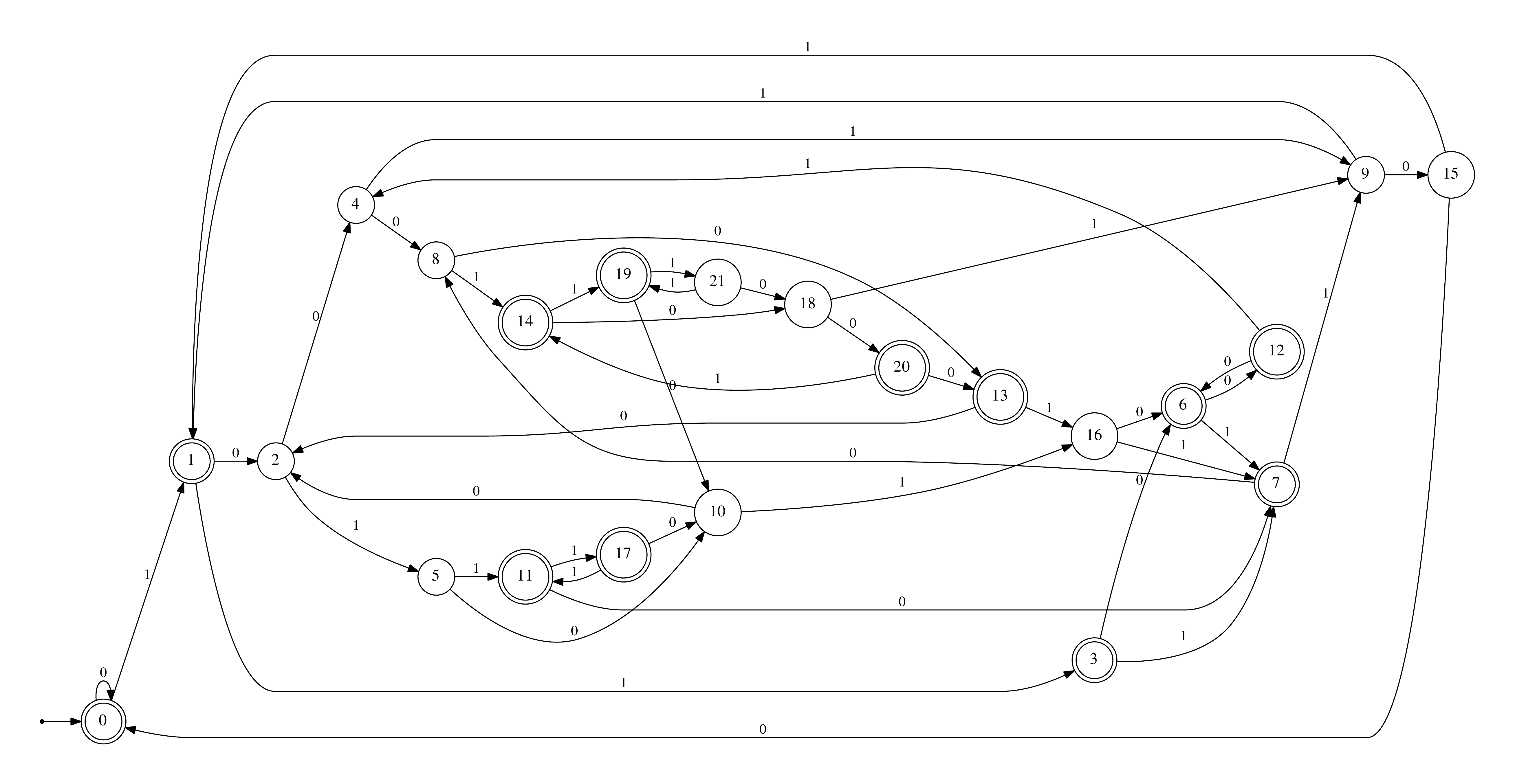}
\caption{DFAO where accepting states have output $1$ and all other states have output $0$.}
\label{figure:22}
\end{figure}

The corresponding representation is as the image, under the coding $\gamma$, of the fixed point of the morphism $q$ defined below over the alphabet $\{0,1,\ldots, 21\}$.   We use commas
to separate letters in the image of $q$, because of the large alphabet size.
\begin{align*}
q(0) &= 0,1 & \quad \gamma(0) &= 1 &
q(1) &= 2,3 & \quad \gamma(1) &= 1 \\
q(2) &= 4,5 & \quad \gamma(2) &= 0 &
q(3) &= 6,7 & \quad \gamma(3) &= 1 \\
q(4) &= 8,9 & \quad \gamma(4) &= 0 &
q(5) &= 10,11 & \quad \gamma(5) &= 0 \\
q(6) &= 12,7 & \quad  \gamma(6) &= 1 &
q(7) &= 8,9 & \quad \gamma(7) &= 1 \\
q(8) &= 13,14 & \quad \gamma(8) &= 0 &
q(9) &= 15,1  & \quad \gamma(9) &= 0 \\
q(10) &= 2,16 & \quad \gamma(10) &= 0 &
q(11) &= 7,17 & \quad \gamma(11) &= 1 \\
q(12) &= 6,4 & \quad \gamma(12) &= 1 &
q(13) &= 2,16 & \quad \gamma(13) &= 1 \\
q(14) &= 18,19 & \quad \gamma(14) &= 1 &
q(15) &= 0,1 & \quad \gamma(15) &= 0 \\
q(16) &= 6,7 & \quad \gamma(16) &= 0 &
q(17) &= 10,11 & \quad \gamma(17) &= 1 \\
q(18) &= 20,9 & \quad \gamma(18) &= 0 &
q(19) &= 10,21 & \quad \gamma(19) &= 1 \\
q(20) &= 13,14 & \quad \gamma(20) &= 1 &
q(21) &= 18,19 & \quad \gamma(21) &= 0
\end{align*}

\begin{theorem}
The infinite word
$$\gamma(q^\omega(0)) = 11010011000111001101001110001101000111010011000\cdots$$
contains only 
$3$ distinct squares:  $0^2$, $1^2$, and $(10)^2$. It has total weight $44$.
\end{theorem}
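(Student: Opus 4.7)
The weight claim is immediate from the construction: $q$ is a $2$-uniform morphism over a $22$-letter alphabet and $\gamma$ is a coding, so the associated DFAO shown in Figure~\ref{figure:22} has $22$ states and reads base-$2$ input, giving weight $2 \cdot 22 = 44$.

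For the squares claim, the strategy is exactly the one advertised in the introduction: reduce to a decidable first-order question about a $2$-automatic sequence. Let ${\bf w} = \gamma(q^\omega(0))$. Since $q$ is $2$-uniform and $\gamma$ is a coding, ${\bf w}$ is $2$-automatic and the function $n \mapsto {\bf w}[n]$ is computed by the $22$-state DFAO displayed. Hence, by the framework of Charlier, Rampersad, and Shallit~\cite{Charlier&Rampersad&Shallit:2012}, every first-order statement about ${\bf w}$ in the structure $\langle \mathbb{N}, +, V_2 \rangle$ extended by the predicate $n \mapsto {\bf w}[n]$ is algorithmically decidable. Concretely, I would encode
\[ \mathrm{Sq}(i,\ell) \;\equiv\; \ell \geq 1 \;\land\; \forall j\,(j < \ell \implies {\bf w}[i+j] = {\bf w}[i+\ell+j]) \]
and then verify the sentence
\[ \forall i\, \forall \ell \;\; \mathrm{Sq}(i,\ell) \implies (\ell = 1) \lor (\ell = 2 \land {\bf w}[i] = 1 \land {\bf w}[i+1] = 0). \]
The case $\ell = 1$ needs no further constraint, since the only length-$2$ squares over $\{0,1\}$ are $00$ and $11$; the case $\ell = 2$ forces the factor to be $1010$. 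Both formulas are first-order in the structure above, so feeding the DFAO for ${\bf w}$ (obtained from $q$ and $\gamma$ by the standard uniform-morphism construction) into Walnut and evaluating the sentence yields a \texttt{TRUE}/\texttt{FALSE} verdict.

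It remains only to confirm that all three squares actually occur, so that the count is exactly $3$ and not fewer. Inspection of the displayed prefix $11010011000111001101\cdots$ suffices: $11$ occurs at position $0$, $00$ at position $3$, and $1010$ at position $8$. The principal obstacle is thus not conceptual but implementational: one has to trust that the decision-procedure computation terminates with \texttt{TRUE}. With only $22$ states in the input automaton, however, the relevant product and projection constructions are expected to remain quite tractable, fully analogous to the verifications needed for $\eta({\bf vtm})$ and $\zeta({\bf vtm})$ in the preceding subsections.
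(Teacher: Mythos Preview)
Your approach is essentially the paper's: feed the $22$-state DFAO into \texttt{Walnut} and discharge the squares claim by evaluating a first-order sentence; your single implication is logically equivalent to the conjunction of the paper's three queries \texttt{qtest1}--\texttt{qtest3} (no square of order $\geq 3$, no $0000$/$1111$, no $0101$), and your explicit check that $00$, $11$, $1010$ all occur is a welcome addition. One small slip to fix: in the displayed prefix the positions you cite are off---$00$ first appears at position $4$ (not $3$) and $1010$ at position $1$ (not $8$)---but this does not affect the argument.
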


By exhaustive search we find that there are no $k$-automatic words containing only three distinct squares, with $s$ states, for $3 \leq k \leq 44$ and $ks \leq 44$.

We propose this word as the simplest of all binary words with three squares.

\section{Verifying the claims}

We used breadth-first search to find candidates for the minimal examples presented here.  The number of states in the minimal automaton were determined using the Myhill-Nerode theorem (see, e.g., \cite{Hopcroft&Ullman:1979}).    We used the theorem-proving software {\tt Walnut} \cite{Mousavi:2016} to verify assertions about the squares contained in each word.
For example, the claim about the $22$-state automaton in the previous section can be proved
as follows:   create the automaton, and call it {\tt Q} in {\tt Walnut}, and then evaluate the following three statements:
\begin{verbatim}
eval qtest1 "Ei,n (n>=3) & At (t<n) => Q[i+t]=Q[i+t+n]":
eval qtest2 "Ei (Q[i]=Q[i+1])&(Q[i]=Q[i+2])&(Q[i]=Q[i+3])":
eval qtest3 "Ei (Q[i]=@0)&(Q[i+1]=@1)&(Q[i+2]=@0)&(Q[i+3]=@1)":
\end{verbatim}
The first predicate  asserts that there is a square of order $\geq 3$ in the word.  The second asserts that there is a square of the form $(00)^2$ or $(11)^2$.  The third asserts that there is a square of the form $(01)^2$.   
Since all three queries return {\tt false}, the word has the desired properties.  The total computation time for this query is a few seconds on a laptop.

Each of Theorems 1,2,4,5,6 can be proved similarly, although some require significant memory resources and time.   The {\tt Walnut} code can be found on the website of the second author:\\
\centerline{\url{https://cs.uwaterloo.ca/~shallit/papers.html} \ .}


\begin{thebibliography}{10}

\bibitem{Badkobeh:2013}
G.~Badkobeh.
\newblock Infinite words containing the minimal number of repetitions.
\newblock {\em J. Discrete Algorithms} {\bf 20} (2013), 38--42.

\bibitem{Badkobeh&Crochemore:2012}
G.~Badkobeh and M.~Crochemore.
\newblock Fewest repetitions in infinite binary words.
\newblock {\em RAIRO Inform. Th\'eor. App.} {\bf 46} (2012), 17--31.

\bibitem{Berstel:1978}
J.~Berstel.
\newblock Sur la construction de mots sans {carr\'e}.
\newblock {\em {S\'eminaire} de {Th\'eorie} des {Nombres}}  (1978--1979),
  18.01--18.15.

\bibitem{Berstel:1995}
J.~Berstel.
\newblock {\em Axel {Thue's} Papers on Repetitions in Words: a Translation}.
\newblock Number~20 in Publications du Laboratoire de Combinatoire et
  d'Informatique {Math\'ematique}. Universit\'e du Qu\'ebec \`a Montr\'eal,
  February 1995.

\bibitem{Blanchet-Sadri&Currie&Rampersad&Fox:2013}
F.~Blanchet-Sadri, J.~Currie, N.~Rampersad, and N.~Fox.
\newblock Abelian complexity of fixed point of morphism $0 \mapsto 012$, $1
  \mapsto 02$, $2 \mapsto 1$.
\newblock {\em INTEGERS: Elect. J. of Combin. Number Theory} {\bf 14} (2014),
  \#A11 (electronic).

\bibitem{Charlier&Rampersad&Shallit:2012}
\'{E}milie Charlier, Narad Rampersad, and Jeffrey Shallit.
\newblock Enumeration and decidable properties of automatic sequences.
\newblock {\em Internat. J. Found. Comp. Sci.} {\bf 23} (2012), 1035--1066.

\bibitem{Cobham:1972}
A.~Cobham.
\newblock Uniform tag sequences.
\newblock {\em Math. Systems Theory} {\bf 6} (1972), 164--192.

\bibitem{Entringer&Jackson&Schatz:1974}
R.~C. Entringer, D.~E. Jackson, and J.~A. Schatz.
\newblock On nonrepetitive sequences.
\newblock {\em J. Combin. Theory Ser. A} {\bf 16} (1974), 159--164.

\bibitem{Fraenkel&Simpson:1994}
A.~S. Fraenkel and J.~Simpson.
\newblock How many squares must a binary sequence contain?
\newblock {\em Electronic J. Combinatorics} {\bf 2} (1994), {\#}R2.

\bibitem{Harju&Nowotka:2006}
T.~Harju and D.~Nowotka.
\newblock Binary words with few squares.
\newblock {\em Bull. European Assoc. Theor. Comput. Sci.} , No.\ 89, (2006),
  164--166.

\bibitem{Hopcroft&Ullman:1979}
J.~E. Hopcroft and J.~D. Ullman.
\newblock {\em Introduction to Automata Theory, Languages, and Computation}.
\newblock Addison-Wesley, 1979.

\bibitem{Mousavi:2016}
H.~Mousavi.
\newblock Automatic theorem proving in {{\tt Walnut}}.
\newblock Available at \url{http://arxiv.org/abs/1603.06017}, 2016.

\bibitem{Ochem:2006}
P.~Ochem.
\newblock A generator of morphisms for infinite words.
\newblock {\em RAIRO Inform. Th\'eor. App.} {\bf 40} (2006), 427--441.

\bibitem{Rampersad&Shallit&Wang:2005}
N.~Rampersad, J.~Shallit, and {M.-w.} Wang.
\newblock Avoiding large squares in infinite binary words.
\newblock {\em Theoret. Comput. Sci.} {\bf 339} (2005), 19--34.

\bibitem{Thue:1906}
A.~Thue.
\newblock {\"Uber} unendliche {Zeichenreihen}.
\newblock {\em Norske vid. Selsk. Skr. Mat. Nat. Kl.} {\bf 7} (1906), 1--22.
\newblock Reprinted in {\it Selected Mathematical Papers of Axel Thue}, T.
  Nagell, editor, Universitetsforlaget, Oslo, 1977, pp.~139--158.

\bibitem{Thue:1912}
A.~Thue.
\newblock {\"Uber} die gegenseitige {Lage} gleicher {Teile} gewisser
  {Zeichenreihen}.
\newblock {\em Norske vid. Selsk. Skr. Mat. Nat. Kl.} {\bf 1} (1912), 1--67.
\newblock Reprinted in {\it Selected Mathematical Papers of Axel Thue}, T.
  Nagell, editor, Universitetsforlaget, Oslo, 1977, pp.~413--478.

\end{thebibliography}
\end{document}